\newtheorem{theorem}{Theorem}[section]
\newtheorem{definition}[theorem]{Definition}
\newtheorem{lemma}[theorem]{Lemma}
\newtheorem{corollary}[theorem]{Corollary}
\newtheorem{theoremDummy}{Theorem}
\newtheorem{corollaryDummy}{Theorem}
\begin{document}

\title{Negotiation Games}
\author{Philipp Hoffmann
\institute{Fakult\"at f\"{u}r Informatik\\
Technische Universit\"{a}t M\"{u}nchen\\
Germany}
\email{ph.hoffmann@tum.de}
}
\newcommand{\titlerunning}{Negotiation Games}
\newcommand{\authorrunning}{Philipp Hoffmann}
\maketitle

\begin{abstract}
Negotiations, a model of concurrency with multi party negotiation as primitive,
have been recently introduced in \cite{negI,negII}. We initiate the study of games for 
this model. We study {\em coalition problems}: can a given coalition of agents 
force that a negotiation terminates (resp. block the negotiation so that it goes 
on forever)?; can the coalition force a given outcome of the negotiation? We show
that for arbitrary negotiations the problems are EXPTIME-complete. 
Then we show that for sound and deterministic or even weakly deterministic
negotiations the problems can be solved in PTIME. Notice that the 
input of the problems is a negotiation, which can be exponentially more compact 
than its state space.
\end{abstract}

\section{Introduction}

In \cite{negI,negII}, the first author and J\"org Desel have introduced 
a model of concurrency with multi party negotiation as primitive. The model
allows one to describe distributed negotiations obtained by combining
``atomic'' multi party negotiations, or \emph{atoms}. Each atom has a number 
of \emph{parties} (the subset of agents involved), and a set  
of possible outcomes. The parties agree on an outcome, which
determines for each party the subset of atoms 
it is ready to engage in next.

Ill-designed negotiations may deadlock, or may contains useless 
atoms, i.e., atoms that can never be executed. The problem whether a 
negotiation is well designed or \emph{sound} was studied in \cite{negI,negII}. 
The main result was the identification of two classes,
called deterministic and acyclic weakly deterministic negotiations, for which the 
soundness problem is tractable: while the problem is PSPACE-complete for arbitrary 
negotiations, it becomes polynomial for these two classes.

In this paper we start the study of games on negotiations. As for games
played on pushdown automata \cite{DBLP:journals/iandc/Walukiewicz01},
vector addition systems with states (VASS) \cite{DBLP:conf/icalp/BrazdilJK10}, counter
machines \cite{DBLP:conf/rp/Kucera12}, or
asynchronous automata \cite{DBLP:conf/fsttcs/MohalikW03}, games on negotiations
can be translated into games played on the (reachable part of the) 
state space. However, the number of states of a negotiation may grow exponentially 
in the number of agents, and so the state space can be exponentially 
larger than the negotiation. We explore the complexity of solving games 
\emph{in the size of the negotiation}, not on the size of the state space. 
In particular, we are interested in finding negotiation classes for which the 
winner can be decided in polynomial time, thus solving the state space explosion problem. 

We study games formalizing the two most interesting questions related to a 
negotiation. First, can a given \emph{coalition} (i.e., a given subset of agents) 
force termination of the negotiation? (Negotiations may contain cycles.) Second, 
can the coalition force a given final outcome?

Our first results show that these two problems are EXPTIME-complete in the 
size of the negotiation. This is the case even if the negotiation is 
deterministic, and so it seems as if the tractability results of 
\cite{negI,negII} cannot be extended to games. But then, we are able to show that, very 
surprisingly, the problems are polynomial for deterministic (or even weakly 
deterministic) negotiations \emph{that are sound}. This is very satisfactory: 
since unsound negotiations are ill-designed, we are not interested in them anyway.
And, very unexpectedly, the restriction to sound negotiations has as collateral effect a
dramatic improvement in the complexity of the problem. Moreover, 
the restriction comes ``at no cost'', because deciding soundness
of deterministic negotiations is also decidable in polynomial time.

The full version of this paper including the appendix is available on arXiv.org.

\paragraph*{Related work.}
Our games can be seen as special cases of concurrent games 
\cite{DBLP:journals/jacm/AlurHK02,DBLP:journals/tcs/AlfaroHK07} in which 
the arena is succinctly represented as a negotiation. Explicit construction of the 
arena and application of the algorithms of 
\cite{DBLP:journals/jacm/AlurHK02,DBLP:journals/tcs/AlfaroHK07} yields an exponential algorithm, 
while we provide a polynomial one. 

Negotiations have the same expressive power as 1-safe Petri nets or 1-safe VASS, 
although they can be exponentially more compact (see \cite{negI,negII}). Games for
unrestricted VASS have been studied in \cite{DBLP:conf/icalp/BrazdilJK10}. 
However, in \cite{DBLP:conf/icalp/BrazdilJK10} the emphasis is on VASS with an 
infinite state space, while we concentrate on the 1-safe case.

The papers closer to ours are those studying games on asynchronous automata
(see e.g. \cite{DBLP:conf/fsttcs/MohalikW03,DBLP:journals/fmsd/GastinSZ09,DBLP:conf/icalp/GenestGMW13}). Like negotiations, asynchronous automata are a model of distributed computation
with a finite state space. These papers study algorithms for deciding the existence of 
distributed strategies for a game, i.e., local strategies for each agent based only on 
the information the agent has on the global system. Our results identify a special case 
with much lower complexity than the general one, in which local strategies are even memoryless.

Finally, economists have studied mathematical models of negotiation games,
but with different goals and techniques (see e.g. \cite{rubinstein1982perfect}). 
In our terminology, they typically consider negotiations 
in which all agents participate in all atomic negotiations. We focus on distributed
negotiations, where in particular atomic negotiations involving disjoint sets
of agents may occur concurrently.

\section{Negotiations: Syntax and Semantics}
Negotiations are introduced in \cite{negI}. We recall the main definitions.
We fix a finite set $\agents$ of \emph{agents} representing potential parties 
of negotiations. In \cite{negI,negII} each agent has an associated set of 
internal states. For the purpose of this paper the internal states are irrelevant, and so we omit them. 

\smallskip
\noindent \textbf{Atoms.} A \emph{negotiation atom}, or just an \emph{atom}, is a pair
$n=(P_n, \outc_n)$, where $P_n \subseteq \agents$ is a nonempty set of \emph{parties}, 
and $\outc_n$ is a finite, nonempty set of \emph{outcomes}.

\smallskip
\noindent \textbf{(Distributed) Negotiations.} A distributed negotiation is
a set of atoms together with a \emph{transition function} $\trans$ that assigns to every 
triple $(n,a,r)$ consisting of an atom $n$, a party $a$ of $n$, and an outcome $r$ of $n$ a set 
$\trans(n,a,r)$ of atoms. Intuitively, this is the set of atomic negotiations 
agent $a$ is ready to engage in after the atom $n$, if the outcome of $n$ is $r$. 

Formally, given a finite set of atoms $N$, let $T(N)$ denote the set of triples $(n, a, r)$ such that $n \in N$, $a\in P_n$, and $r \in \outc_n$. 
A \emph{negotiation} is a tuple $\N=(N, n_0, n_f, \trans)$, where 
$n_0, n_f \in N$ are the \emph{initial} and \emph{final} atoms, and  
$\trans \colon T(N) \rightarrow 2^N$ is the \emph{transition function}. Further, 
$\N$ satisfies the following properties: 
(1) every agent of $\agents$ participates in both $n_0$ and $n_f$; 
(2) for every $(n, a, r) \in T(N)$: $\trans(n, a, r)= \emptyset$ if{}f $n=n_f$.

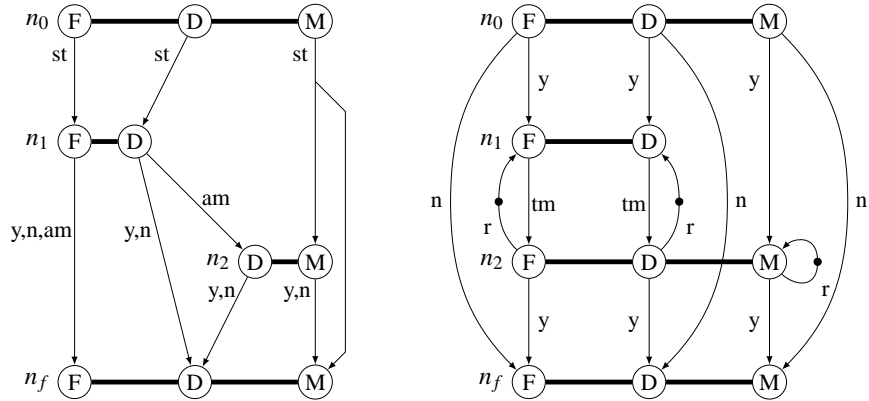
\begin{figure}[bth]
\centering
\scalebox{0.8}{
\begin{tikzpicture}
\nego[text=FDM,id=n0,spacing=2]{0,0}
\node[left = 0cm of n0-P0, font=\large] {$n_0$};
\nego[text=FD,id=n1]{0,-2}
\node[left = 0cm of n1-P0, font=\large] {$n_1$};
\nego[text=DM,id=n2]{3,-4}
\node[left = 0cm of n2-P0, font=\large] {$n_2$};
\nego[text=FDM,id=nf,spacing=2]{0,-6}
\node[left = 0cm of nf-P0, font=\large] {$n_f$};
\pgfsetarrowsend{latex}
\draw (n0-P0) -- (n1-P0);
\node[anchor=south] at ($(n0-P0)+(-0.25,-0.8)$) {st};
\draw (n0-P1) -- (n1-P1);
\node[anchor=south] at ($(n0-P1)+(-0.55,-0.8)$) {st};
\draw (n0-P2) -- (n2-P1);
\node[anchor=south] at ($(n0-P2)+(-0.25,-0.8)$) {st};
\draw (4,-1) -- (4.5,-1.5) -- (4.5,-5.5) -- (nf-P2);
\draw (n1-P0) -- (nf-P0);
\node[anchor=south] at ($(n1-P0)+(-0.55,-1.8)$) {y,n,am};
\draw (n1-P1) -- (n2-P0);
\node[anchor=south] at ($(n1-P1)+(+1.35,-1.2)$) {am};
\draw (n1-P1) -- (nf-P1);
\node[anchor=south] at ($(n1-P1)+(+0.05,-1.8)$) {y,n};
\draw (n2-P0) -- (nf-P1);
\node[anchor=south] at ($(n2-P0)+(-0.55,-0.8)$) {y,n};
\draw (n2-P1) -- (nf-P2);
\node[anchor=south] at ($(n2-P1)+(-0.30,-0.8)$) {y,n};
\end{tikzpicture}
\hspace{1cm}
\begin{tikzpicture}
\nego[text=FDM,id=n0,spacing=2]{0,0}
\node[left = 0cm of n0-P0, font=\large] {$n_0$};
\nego[text=FD,id=n1,spacing=2]{0,-2}
\node[left = 0cm of n1-P0, font=\large] {$n_1$};
\nego[text=FDM,id=n2,spacing=2]{0,-4}
\node[left = 0cm of n2-P0, font=\large] {$n_2$};
\nego[text=FDM,id=nf,spacing=2]{0,-6}
\node[left = 0cm of nf-P0, font=\large] {$n_f$};
\pgfsetarrowsend{latex}
\draw (n0-P0) -- (n1-P0);
\node[anchor=south] at ($(n0-P0)+(+0.25,-1.3)$) {y};
\draw (n0-P1) -- (n1-P1);
\node[anchor=south] at ($(n0-P1)+(-0.25,-1.3)$) {y};
\draw (n0-P2) -- (n2-P2);
\node[anchor=south] at ($(n0-P2)+(-0.25,-1.3)$) {y};
\draw (n1-P0) -- (n2-P0);
\node[anchor=south] at ($(n1-P0)+(+0.25,-1.3)$) {tm};
\draw (n1-P1) -- (n2-P1);
\node[anchor=south] at ($(n1-P1)+(-0.25,-1.3)$) {tm};
\draw (n2-P0) -- (nf-P0);
\node[anchor=south] at ($(n2-P0)+(+0.25,-1.3)$) {y};
\draw (n2-P1) -- (nf-P1);
\node[anchor=south] at ($(n2-P1)+(-0.25,-1.3)$) {y};
\draw (n2-P2) -- (nf-P2);
\node[anchor=south] at ($(n2-P2)+(-0.25,-1.3)$) {y};
\draw (n0-P0) to [out=225,in=90] node [at end,auto,swap] {n} (-1.3cm,-3cm) to [out=-90,in=135] (nf-P0);
\draw (n0-P1) to [out=-45,in=90] node [at end,auto] {n} (+3.3cm,-3cm) to [out=-90,in=45] (nf-P1);
\draw (n0-P2) to [out=-45,in=90] node [at end,auto] {n} (+5.3cm,-3cm) to [out=-90,in=45] (nf-P2);
\draw (n2-P0) to [out=135,in=-90] node [near end,auto] {r}  (-0.5cm,-3cm) to [out=90,in=-135] (n1-P0);
\draw (n2-P1) to [out=45,in=-90] node [near end,auto,swap] {r}  (+2.5cm,-3cm) to [out=90,in=-45] (n1-P1);
\draw (n2-P2) to [out=-45,in=-90,in looseness=2] node [near end, auto, swap] {r} (+4.8cm,-4cm) to [out=90,in=45,out looseness=2] (n2-P2);
\token[color=black]{-0.5,-3}
\token[color=black]{2.5,-3}
\token[color=black]{4.8,-4}
\end{tikzpicture}
}
\caption{An acyclic and a cyclic negotiation.}
\label{fig:dneg}
\end{figure}

\noindent \textbf{Graphical representation.} Negotiations are graphically represented as shown in Figure \ref{fig:dneg}. 
For each atom $n \in N$ we draw a black bar; for each party $a$ of $P_n$ we 
draw a white circle on the bar, called a \emph{port}. For each $(n,a,r) \in T(N)$, 
we draw a hyper-arc leading from the port of $a$ in $n$ to all the ports of $a$ in 
the atoms of $\trans(n,a,r)$, and label it by $r$. 
Figure \ref{fig:dneg} shows two Father-Daughter-Mother negotiations. On the left,
Daughter and Father negotiate with possible outcomes \texttt{yes} ($\texttt{y}$), 
\texttt{no} ($\texttt{n}$),
and \texttt{ask\_mother} ($\texttt{am}$). If the outcome is the latter, then Daughter and Mother negotiate
with outcomes \texttt{yes}, \texttt{no}. In the negotiation on the right,
Father, Daughter and Mother negotiate with outcomes \texttt{yes} and \texttt{no}.
If the outcome is \texttt{yes}, then Father and Daughter negotiate a return time  
(atom $n_1$) and propose it to Mother (atom $n_2$). 
If Mother approves (outcome \texttt{yes}), then the negotiation terminates, 
otherwise (outcome \texttt{r}) Daughter and Father renegotiate the return time. 

\smallskip
\noindent \textbf{Semantics. } A \emph{marking} of a negotiation $\N=(N, n_0, n_f, \trans)$ is a mapping 
$\vx \colon \agents \rightarrow 2^N$. Intuitively, $\vx(a)$ is the set of atoms that agent $a$ is currently ready to engage in next. 
The \emph{initial} and \emph{final} markings,  denoted by $\vx_0$ and $\vx_f$ respectively, are given by $\vx_0(a)=\{n_0\}$ and 
$\vx_f(a)=\emptyset$ for every $a \in \agents$. 

A marking $\vx$ \emph{enables} an atom $n$ if $n \in \vx(a)$ for every $a \in P_n$,
i.e., if every party of $n$ is currently ready to engage in it.
If $\vx$ enables $n$, then $n$ can take place and its parties
agree on an outcome $r$; we say that $(n,r)$ \emph{occurs}.
Abusing language, we will call this pair also an outcome.
The occurrence of $(n,r)$ produces a next marking $\vx'$ given by $\vx'(a) = \trans(n,a,r)$ for every $a \in P_n$, 
and $\vx'(a)=\vx(a)$ for every $a \in \agents \setminus P_n$. 
We write $\vx \by{(n,r)} \vx'$ to denote this. 

By this definition, $\vx (a)$ is always either $\{n_0\}$ or equals 
$\trans(n,a,r)$ for some atom $n$ and outcome $r$. 
The marking $\vx_f$ can only be reached by the occurrence of 
$(n_f, r)$ ($r$ being a possible outcome of $n_f$), 
and it does not enable any atom. Any other marking that does not enable 
any atom is a \emph{deadlock}.

Reachable markings are graphically represented by placing tokens (black dots) on the forking points of the hyper-arcs (or in the middle of an arc). Figure \ref{fig:dneg} shows on the right a marking in which \texttt{F} and \texttt{D} are ready to engaging $n_1$ and \texttt{M} is ready to engage in $n_2$.

We write $\vx_1 \by{\sigma}$ to denote that there is a sequence 
\begin{equation*}
\vx_1 \by{(n_1,r_1)} \vx_2 \by{(n_2,r_2)}\cdots \by{(n_{k-1},r_{k-1})} \vx_{k} \by{(n_k,r_k)} \vx_{k+1} \cdots
\end{equation*}
such that  $\sigma = (n_1, r_1) \ldots (n_{k}, r_{k}) \ldots$. If
$\vx_1 \by{\sigma}$, then $\sigma$ is an \emph{occurrence sequence} from the marking $\vx_1$, and $\vx_1$ enables $\sigma$.
If $\sigma$ is finite, then we write
$\vx_1 \by{\sigma} \vx_{k+1}$ and say that $\vx_{k+1}$ is \emph{reachable} from $\vx_1$. 

\smallskip
\noindent \textbf{Soundness. } A negotiation is \emph{sound} if \emph{(a)} every atom is enabled at some reachable marking, and \emph{(b)} every occurrence sequence from the initial marking 
either leads to the final marking $\vx_f$, or can be extended to an
occurrence sequence that leads to $\vx_f$.  

The negotiations of Figure \ref{fig:dneg} are sound. However, if we set in the left negotiation
$\trans(n_0,\texttt{M}, \texttt{st})= \{n_2\}$ instead of $\trans(n_0,\texttt{M}, \texttt{st})= \{n_2, n_f\}$, then the occurrence sequence $(n_0,\texttt{st}) (n_1, \texttt{yes})$
leads to a deadlock.

\smallskip
\noindent \textbf{Determinism and weak determinism. } An agent $a \in \agents$ is \emph{deterministic} if for every $(n,a,r) \in T(N)$ such that $n \neq n_f$ there exists an atom
$n'$ such that $\trans(n,a,r) = \{n'\}$. 
    
The negotiation $\N$  is \emph{weakly deterministic} if for every $(n,a,r) \in T(N)$ there is a deterministic
agent $b$ that is a party of every atom in $\trans(n,a,r)$, i.e., $b \in P_{n'}$ for every $n' \in \trans(n,a,r)$.
In particular, every reachable atom has a deterministic party.
It is \emph{deterministic} if all its agents are deterministic.

Graphically, an agent $a$ is deterministic if no proper hyper-arc leaves any port 
of $a$, and a negotiation is deterministic if there are no proper hyper-arcs.
The negotiation on the left of Figure \ref{fig:dneg} is not deterministic (it contains a proper hyper-arc for Mother), while the one on the right is deterministic. 

\section{Games on Negotiations}
\label{sec:games}

We study a setting that includes, as a special case, the questions about coalitions mentioned
in the introduction: Can a given coalition (subset of agents) force termination of the negotiation?
Can the coalition force a given concluding outcome?

In many negotiations, there are reachable markings that enable more than one atom. If two of those atoms share an agent, the occurrence of one might disable the other and they are not truly concurrent. For a game where we want to allow concurrent moves, we formalize this concept with the notion of an independent set of atoms.

\begin{definition}
A set of atoms $S$ is independent if no two distinct atoms of $S$ share an agent, i.e., $P_n \cap P_{n'} = \emptyset$
for every $n, n' \in S$, $n \neq n'$.
\end{definition}
It follows immediately from the semantics that if a marking $\vx$ enables all atoms of $S$ and we fix an outcome $\r_i$ for each $n_i \in S$,
then there is a unique marking $\vx'$ such that $\vx \by{\sigma} \vx'$ for every 
sequence $\sigma = (n_1,\r_1) \ldots (n_k,\r_k)$ such that each atom of $S$ appears exactly once in
$\sigma$. In other words, $\vx'$ depends only on the outcomes of the atoms, and not on the order in
which they occur.

A \emph{negotiation arena} is a negotiation whose set $N$ of atoms 
is partitioned into two sets $N_1$ and $N_2$. We consider  
concurrent games \cite{DBLP:journals/jacm/AlurHK02,DBLP:journals/tcs/AlfaroHK07} with three 
players called Player 1, Player 2, and Scheduler. At each step, Scheduler chooses a nonempty set
of independent atoms among the atoms enabled at the current marking of the negotiation arena. 
Then, Player 1 and Player 2, independently of each other, select an outcome 
for each atom in $S\cap N_1$ and $S\cap N_2$, respectively. Finally, 
these outcomes occur in any order, and the game moves to the unique marking 
$\vx'$ mentioned above. The game terminates if it reaches a marking enabling no atoms,
otherwise it continues forever. 

Formally, a partial play is a sequence of tuples $(S_i, F_{i,1}, F_{i,2})$ 
where each $S_i \subseteq N$ is a set of independent atoms and $F_{i,j}$ assigns to every 
$n \in S_i\cap N_j$ an outcome $\r \in R_n$. Furthermore it must hold that every atom $n \in S_i$ 
is enabled after all atoms in $S_0,...,S_{i-1}$ have occurred with the outcomes specified by $F_{0,1}, F_{0,2},...,F_{i-1,1}, F_{i-1,2}$. 
A play is a partial play that is either infinite or reaches a marking enabling no atoms.
For a play $\pi$ we denote by $\pi_i$ the partial play consisting of the first $i$ tuples of $\pi$.

We consider two different winning conditions. In the \emph{termination game}, 
Player 1 wins a play if the play ends with $n_f$ occurring, otherwise Player 2 wins. In the \emph{concluding-outcome game},
we select for each agent $a$ a set of outcomes $G_a$ such that $n_f \in \trans(n,a,r)$ for $r \in G_a$ (that is, after any outcome $\r \in G_a$, agent $a$ is ready to terminate). Player 1 wins if the the play ends with $n_f$ occurring, and for each agent $a$ the last outcome $(n, \r)$ of the play such that $a$ is a party of $n$ belongs to $G_a$. 

A strategy $\sigma$ for Player $j, j\in\{1, 2\}$ is a partial function that, given a partial play 
$\pi = (S_0, F_{0,1},F_{0,2}),..,$ $(S_i, F_{i,1},F_{i,2})$ and a set of atoms $S_{i+1}$ returns 
a function $F_{i+1,j}$ 
according to the constraints above. A play $\pi$ is said to be played according to a 
strategy $\sigma$ of Player $j$ 
if for all $i$, $\sigma(\pi_i, S_{i+1}) = F_{i+1,j}$. A strategy $\sigma$ is a winning 
strategy for Player $j$ if he 
wins every play that is played according to $\sigma$. Player $j$ is said to win
 the game if he has a winning strategy. Notice that if Player 1 has a winning strategy 
then he wins every play against any pair of strategies for Player 2 and Scheduler. 

\begin{definition}
Let $\N$ be a negotiation arena.
The termination (resp. concluding-outcome) problem for $\N$ consists of deciding whether Player 1 has a winning strategy for the termination game (concluding-outcome game).
\end{definition}

\begin{figure}[h]
\centering
\scalebox{1.1}{
\input{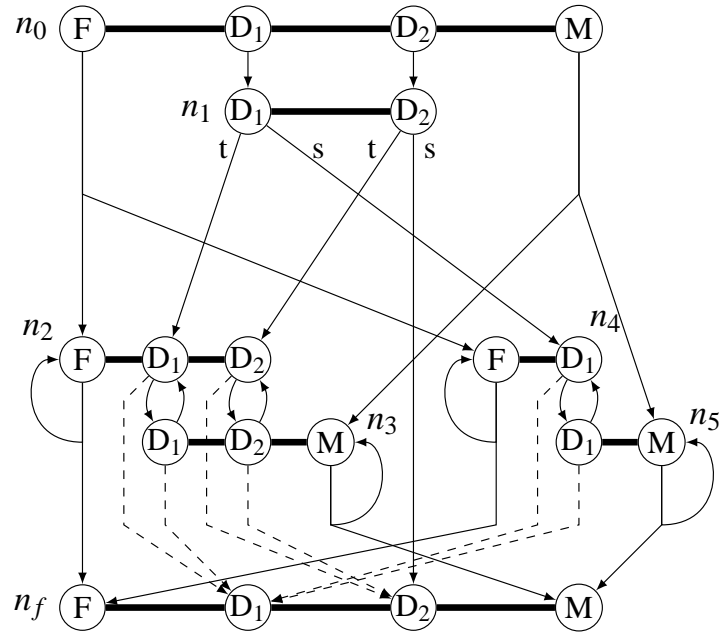}
}
\caption{Atom control and determinism}
\label{fddm}
\end{figure}

Assume we want to model the following situation: In a family with Father (F), Mother (M) and two Daughters (D$_1$ and D$_2$), Daughter D$_1$ wants to go to a party. She can talk to each parent individually, but can choose beforehand whether to take her sister D$_2$ with her or not. Figure \ref{fddm} models this negotiation. 
The solid edges for the daughters between $n_2$ and $n_3$ and
between $n_4$ and $n_5$ ``ask the other parent'' outcomes, while the dashed edges 
represent the ``yes'' and ``no'' outcomes. Assume the daughters work together to reach termination.
Then in $N_1 = \{ n_1, n_2, n_3 \}$ the daughters have 
a majority which we will interpret as ``they can choose which outcome is taken''. Can the daughters force termination?

At atom $n_1$ the daughters decide whether D$_2$ should participate in the conversation with the parents (outcome \texttt{t}) or not (outcome \texttt{s}). 
If the daughters choose outcome \texttt{s},
then Father and Mother can force an infinite loop between $n_4$ and $n_5$. On the contrary, if the daughters choose to stay together, then, 
since they control atom $n_2$, they can force a  ``yes'' or ``no'' outcome, and therefore termination. 

The questions whether a coalition ${\cal C}$ of agents can force termination or a 
certain outcome are special instances of the termination and concluding-outcome problems. 
In these instances, an atom $n$ belongs to $N_1$---the set 
of atoms controlled by Player 1--- if{}f a strict majority of the agents of $n$ 
are members of ${\cal C}$. 

\subsection{Coalitions}

Before we turn to the termination and concluding-outcome problems, we briefly study coalitions. Intuitively, a coalition controls all the atoms where it has strict majority. We show that while the definition of the partition of the atoms $N$ according to the participating agents may seem restricting, this is not the case: In all cases but the deterministic sound case, any partition can be reached, possibly by adding agents.

We define the partition of $N$ via a partition of the agents: Let the agents $A$ be partitioned into two sets $A_1$ and $A_2$. Define $N_1 = \{ n \in N : |P_n \cap A_1| > |P_n \cap A_2|\}$, $N_2 = N \backslash N_1$. Note that ties are controlled by $A_2$.

We first show that in the nondeterministic and weakly deterministic case, this definition is equivalent to one where we decide control for each atom and not for each agent.

\begin{figure}[h]
\centering
\scalebox{0.8}{
\input{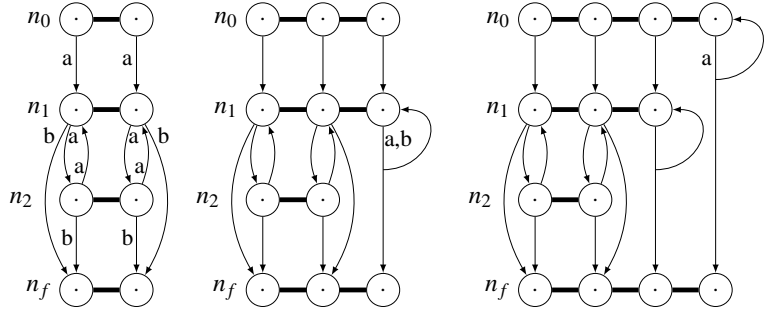}
\tikz{
\nego[text={$A$}{$B$}{$a$},id=n0]{0,0}
\node[left = 0cm of n0-P0, font=\large] {$n_0$};
\nego[text={$A$}{$B$}{$a$},id=n1]{0,-1.5}
\node[left = 0cm of n1-P0, font=\large] {$n_1$};
\nego[text={$A$}{$B$},id=n2]{0,-3}
\node[left = 0.3cm of n2-P0, font=\large] {$n_2$};
\nego[text={$A$}{$B$}{$a$},id=nf]{0,-4.5}
\node[left = 0cm of nf-P0, font=\large] {$n_f$};
\pgfsetarrowsend{latex}
\draw (n0-P0) -- (n1-P0);
\draw (n0-P1) -- (n1-P1);
\draw (n0-P2) -- (n1-P2);
\draw (n1-P0) to[out=-110,in=110] (n2-P0);
\draw (n1-P1) to[out=-110,in=110] (n2-P1);
\draw (n1-P0) to[out=-120,in=120] (nf-P0);
\draw (n1-P1) to[out=-60,in=60] (nf-P1);
\draw (n1-P2) -- (nf-P2);
\draw (n1-P2) + (0,-1) to [out=0,in=0,out looseness = 2,in looseness=2] (n1-P2);
\draw (n2-P0) to[out=70,in=-70] (n1-P0);
\draw (n2-P1) to[out=70,in=-70] (n1-P1);
\draw (n2-P0) -- (nf-P0);
\draw (n2-P1) -- (nf-P1);
\node[anchor=south] at ($(n1-P2)+(0.25,-0.8)$) {a,b};
}
\tikz{
\nego[text={$A$}{$B$}{$a$}{$b$},id=n0]{0,0}
\node[left = 0cm of n0-P0, font=\large] {$n_0$};
\nego[text={$A$}{$B$}{$a$},id=n1]{0,-1.5}
\node[left = 0cm of n1-P0, font=\large] {$n_1$};
\nego[text={$A$}{$B$},id=n2]{0,-3}
\node[left = 0.3cm of n2-P0, font=\large] {$n_2$};
\nego[text={$A$}{$B$}{$a$}{$b$},id=nf]{0,-4.5}
\node[left = 0cm of nf-P0, font=\large] {$n_f$};
\pgfsetarrowsend{latex}
\draw (n0-P0) -- (n1-P0);
\draw (n0-P1) -- (n1-P1);
\draw (n0-P2) -- (n1-P2);
\draw (n0-P3) -- (nf-P3);
\draw (n0-P3) + (0,-1) to [out=0,in=0,out looseness = 2,in looseness=2] (n0-P3);
\draw (n1-P0) to[out=-110,in=110] (n2-P0);
\draw (n1-P1) to[out=-110,in=110] (n2-P1);
\draw (n1-P0) to[out=-120,in=120] (nf-P0);
\draw (n1-P1) to[out=-60,in=60] (nf-P1);
\draw (n1-P2) -- (nf-P2);
\draw (n1-P2) + (0,-1) to [out=0,in=0,out looseness = 2,in looseness=2] (n1-P2);
\draw (n2-P0) to[out=70,in=-70] (n1-P0);
\draw (n2-P1) to[out=70,in=-70] (n1-P1);
\draw (n2-P0) -- (nf-P0);
\draw (n2-P1) -- (nf-P1);
\node[anchor=south] at ($(n0-P3)+(-0.15,-0.9)$) {a};
}
}
\caption{Atom control via additional agents}
\label{fig:control}
\end{figure}

Consider the example given in Figure \ref{fig:control}. On the left a deterministic negotiation with two agents is given. Assume the coalitions are $A_1 = \{A\}$ and $A_2 = \{B\}$. By the definition above, $N_2 = N$, thus coalition $A_2$ controls every atom. We want to change control of $n_1$ so that $A_1$ controls $n_1$, changing the negotiation to a weakly deterministic one on the way. We add an additional agent $a$ that participates in $n_0, n_1, n_f$ as shown in Figure \ref{fig:control} in the middle and set $A_1 = \{A,a\}$. Now $A_1$ controls $n_1$ but also $n_0$ and $n_f$. We therefore add another agent $b$ that participates in $n_0,n_f$ as shown in Figure \ref{fig:control} on the right. Now the partition of atoms is exactly $N_1 = \{n_1\}$ and $N_2 = \{n_0,n_2,n_f\}$, as desired.
In general, by adding nondeterministic agents to the negotiation, we can change the control for each atom individually. For each atom $n$ whose control we wish to change, we add a number of agents to that atom, the initial atom $n_0$ and final atom $n_f$. We add nondeterministic edges for these agent from $n_0$ to $\{n, n_f\}$ for each outcome of $n_0$ and from  $n$ to $\{n, n_f\}$ for each outcome of $n$. It may be necessary to add more agents to $n_0$ that move to $n_f$ in order to preserve the control of $n_0$ or $n_f$. This procedure changes the control of $n$ while preserving soundness and weak determinism.

We proceed by showing that in the deterministic case, we cannot generate any atom control by adding more agents.

\begin{lemma}
We cannot add deterministic agents to the negotiation on the loft of Figure \ref{fig:control} in a manner, such that soundness is preserved and Player 1 controls $n_1$, Player 2 controls $n_2$.
\end{lemma}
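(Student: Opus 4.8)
The plan is to reduce the statement to a purely structural claim about participation: in any extension by deterministic agents that remains sound, the atoms $n_1$ and $n_2$ must have exactly the same set of parties, i.e. $P_{n_1}=P_{n_2}$. Once this is established the lemma follows, since the controller of an atom $n$ depends only on $|P_n\cap A_1|$ versus $|P_n\cap A_2|$: if $P_{n_1}=P_{n_2}$ then $n_1$ and $n_2$ are controlled by the same player under every partition of the agents into $A_1,A_2$, so we can never have $n_1\in N_1$ and $n_2\in N_2$ simultaneously. Thus the real work is to prove $P_{n_1}=P_{n_2}$.

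First I would record two easy facts about a deterministic extension of the negotiation on the left of Figure~\ref{fig:control}. Adding agents changes neither the atom set $\{n_0,n_1,n_2,n_f\}$ nor the transitions of the two original agents $A,B$, which both lie in $P_{n_1}\cap P_{n_2}$ and loop $n_1\leftrightarrow n_2$ via outcome \texttt{a}, leaving to $n_f$ via \texttt{b}. Moreover, by property~(1), every newly added agent is a party of $n_0$ and $n_f$. Hence it suffices to rule out, for a new deterministic agent $c$, the two ``asymmetric'' cases $c\in P_{n_1}\setminus P_{n_2}$ and $c\in P_{n_2}\setminus P_{n_1}$; these are symmetric, so I focus on the first.

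The core of the argument exploits the cycle $n_1\leftrightarrow n_2$ together with the determinism of $c$. Suppose $c\in P_{n_1}\setminus P_{n_2}$. Since $c$ participates only in $n_0,n_1,n_f$, the atom $m$ with $\trans(n_1,c,\texttt{a})=\{m\}$ must be one of these three (otherwise $c$ waits forever for an atom it can never engage in). By soundness $n_1$ is enabled at some reachable marking; from there fire $n_1$ with outcome \texttt{a}, sending $A,B$ to $n_2$ and $c$ to $m$. Now the only atom that can move $A$ and $B$ is $n_2$, and $c$ cannot move before $n_2$ fires (each of $n_0,n_1,n_f$ needs $A,B$, who are parked at $n_2$), so there is a reachable marking enabling $n_2$ at which $c$ still sits at $m$. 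I would then consider the two continuations. Firing $n_2$ with \texttt{a} returns $A,B$ to $n_1$, and any extension to $\vx_f$ forces $n_1$ to fire, which requires $c$ ready for $n_1$, i.e. $m=n_1$; firing $n_2$ with \texttt{b} sends $A,B$ to $n_f$, and reaching $\vx_f$ forces $n_f$ to fire, which requires $c$ ready for $n_f$, i.e. $m=n_f$. Since clause~(b) of soundness demands that \emph{both} occurrence sequences extend to $\vx_f$, this gives $m=n_1$ and $m=n_f$, a contradiction. (The remaining value $m=n_0$ is excluded separately: $n_0$ can never be re-enabled because $A,B$ never return to it, so $c$ would be stranded.) The symmetric computation rules out $c\in P_{n_2}\setminus P_{n_1}$, and therefore $P_{n_1}=P_{n_2}$.

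The main obstacle is making the ``$c$ cannot satisfy both branches'' step fully rigorous in the presence of the other added agents: I must argue that the two occurrence sequences really exist and that $c$ has not moved when $n_2$ becomes enabled. The clean way is to invoke soundness twice---once to obtain a reachable marking enabling $n_1$, and once (clause~(b)) to force the diverging sequences through $n_2$ to extend to $\vx_f$---rather than to track a global deadlock marking. The only properties of $c$ that matter are that it is deterministic, so $m$ is a single atom fixed \emph{before} the outcome of $n_2$ is known, and that its possible atoms are confined to $\{n_0,n_1,n_f\}$; everything else about the extension is irrelevant.
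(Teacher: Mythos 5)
Your proof is correct, and its engine is the same as the paper's: a deterministic agent attached to $n_1$ must commit, before the outcome of $n_2$ is known, to a single atom $m$, and soundness forces the $(n_2,\texttt{a})$-continuation to demand $m=n_1$ while the $(n_2,\texttt{b})$-continuation demands $m=n_f$ --- this is exactly the paper's case split ``if any agent remains in $n_1$, choosing outcome $b$ in $n_2$ leads to a deadlock, otherwise choosing $a$ leads to a deadlock.'' What differs is the decomposition, and yours is tighter. The paper argues directly about control: it assumes Player 1 controls $n_1$, fires $(n_0,\texttt{a})(n_1,\texttt{a})$ from the initial marking, and case-splits on where the added agents of $n_1$ land; the case of an agent landing in (hence party of) $n_2$ is dispatched in a footnote by ``exchanging the roles of $n_1$ and $n_2$,'' which implicitly requires still more agents at $n_2$ and invites a regress that is never explicitly closed. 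Your route --- first prove the structural invariant $P_{n_1}=P_{n_2}$ for every sound extension by deterministic agents, ruling out the symmetric difference agent by agent, then observe that equal party sets force the same controller under \emph{any} partition of the agents into $A_1,A_2$ --- makes that footnote unnecessary: an agent belonging to both $P_{n_1}$ and $P_{n_2}$ is simply not a counterexample to the invariant, so no regress can arise. You also fill two small gaps the paper glosses over: the case $m=n_0$ (the agent is stranded because $n_0$ is never re-enabled), and the existence of the needed occurrence sequences, which you obtain cleanly from soundness clause (a) (a reachable marking enabling $n_1$) and clause (b) (both continuations through $n_2$ extend to $\vx_f$), rather than asserting that $(n_0,\texttt{a})(n_1,\texttt{a})$ is fireable. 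The price is a somewhat longer argument; the reward is an intermediate statement ($P_{n_1}=P_{n_2}$) that is reusable and strictly more informative than the lemma itself.
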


\begin{proof}
Consider again the deterministic negotiation game on the left of Figure \ref{fig:control}. Assume we have added deterministic agents such that Player 1 controls $n_1$. After the occurring sequence $\vx_0 \by{(n_0,\text{a})} \vx_1 \by{(n_1,\text{a})} \vx_2$, those additional agents have moved deterministically, either to $n_1$ or $n_f$. \footnote{Moving to $n_2$ would change the control there, thus additional agents have to be added to $n_2$, we then can use a similar argument as follows by exchanging the roles of $n_1$ and $n_2$.} 

If any agent remains in $n_1$, choosing outcome $b$ in $n_2$ leads to a deadlock, otherwise, choosing $a$ leads to a deadlock. Thus the negotiation is no longer sound.
\end{proof}

\section{The Termination Problem}
\label{sec:complexity}

We turn to the general complexity of the problem. It is easy to see that
the termination problem can be solved in exponential time.

\begin{theorem}
\label{thm:membership}
The termination problem is in EXPTIME.
\end{theorem}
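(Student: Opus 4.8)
The plan is to reduce the termination game to a finite concurrent reachability game played on the reachable markings of $\N$, and then to solve that game by an attractor fixpoint whose running time is singly exponential in the size of $\N$. First I would build the game graph $G$. Its vertices are the markings reachable from $\vx_0$. Since for every agent $a$ the value $\vx(a)$ is either $\{n_0\}$ or $\trans(n,a,r)$ for some triple $(n,a,r) \in T(N)$, the set of possible values of $\vx(a)$ has at most $|T(N)|+1$ elements, so there are at most $(|T(N)|+1)^{|\agents|}$ markings in total; this is singly exponential in the size of $\N$, and the reachable ones can be enumerated by forward exploration within the same bound. A move out of a marking $\vx$ consists of a nonempty independent set $S$ of atoms enabled at $\vx$ (Scheduler's choice) together with outcome assignments $F_1$ on $S\cap N_1$ (Player 1) and $F_2$ on $S\cap N_2$ (Player 2); by the determinacy observation following the definition of independent sets this triple determines a unique successor marking. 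The number of moves out of $\vx$ is at most $2^{|N|}\cdot(\max_n |\outc_n|)^{|N|}$, again singly exponential, so $G$ has singly exponential size.

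Next I would recast the winning condition as pure reachability. Because every agent is a party of $n_f$ and $\trans(n_f,a,r)=\emptyset$, any occurrence of $(n_f,r)$ produces the final marking $\vx_f$; conversely $\vx_f$ is the only marking an occurrence of $n_f$ can produce, and it enables no atom, so the play ends there. Hence Player 1 wins a play exactly when it reaches $\vx_f$, while every other deadlock (a marking $\neq \vx_f$ enabling no atom) and every infinite play is a win for Player 2. I would then compute Player 1's sure-winning region as the least fixpoint
\[
W = \bigcup_{i\ge 0} W_i, \qquad W_0 = \{\vx_f\}, \qquad W_{i+1} = W_i \cup \mathrm{CPre}(W_i),
\]
where $\vx \in \mathrm{CPre}(X)$ iff for every Scheduler choice $S$ at $\vx$ there exists a Player 1 assignment $F_1$ such that for every Player 2 assignment $F_2$ the induced successor lies in $X$. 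This is exactly the one-step predecessor for \emph{sure} winning in the concurrent setting: the innermost ``for all $F_2$'' records that Player 1 does not see Player 2's simultaneous move, while the outer ``for all $S$'' treats Scheduler as an adversary whom Player 1 may answer only after seeing $S$. By the standard theory of concurrent games \cite{DBLP:journals/jacm/AlurHK02,DBLP:journals/tcs/AlfaroHK07}, the least fixpoint of $X \mapsto \{\vx_f\}\cup\mathrm{CPre}(X)$ is precisely the set of markings from which Player 1 has a (memoryless) sure-winning strategy, and Player 1 wins the termination game iff $\vx_0 \in W$.

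For the complexity, the fixpoint stabilises within $|W|\le |G|$ iterations, and each iteration re-evaluates $\mathrm{CPre}$ by examining, at every vertex, each of its moves and, for each Scheduler set $S$, the quantified choices of $F_1$ and $F_2$. Since the number of vertices, the number of moves per vertex, and the number of outcome assignments are all singly exponential in the size of $\N$, the whole computation runs in singly exponential time, placing the termination problem in EXPTIME.

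I expect that there is no genuine obstacle here, only two points that need care: getting the quantifier structure of $\mathrm{CPre}$ right for the three-party concurrent step (Player 1 reacts to $S$ but not to $F_2$, and Scheduler and Player 2 are both adversarial), and the observation that each $\vx(a)$ ranges over only polynomially many values, which is what keeps the number of markings singly rather than doubly exponential. Everything else is a routine explicit-state-space construction.
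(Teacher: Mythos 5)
Your proposal is correct and follows essentially the same route as the paper: both reduce the termination game to a singly-exponential concurrent reachability game over the reachable markings (with Player~1 winning iff $\vx_f$ is reached) and solve it in time polynomial in that game's size. The only cosmetic difference is that the paper introduces explicit Scheduler nodes $(\vx, N_{\vx})$ and cites the polynomial-time algorithm of \cite{DBLP:journals/jacm/AlurHK02} as a black box, whereas you fold Scheduler's choice into the quantifier structure of $\mathrm{CPre}$ and spell out the sure-winning fixpoint directly.
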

\begin{proof}
Sketch. (See the appendix for details.) We construct a concurrent reachability 
game on a graph such that Player 1 wins the negotiation game if{}f she wins this new game. 
The game has single exponential size in the size of the negotiation arena.  

The nodes of the graph are either
markings $\vx$ of the negotiation, or pairs $(\vx, N_{\vx})$, where $\vx$ is a marking
and $N_{\vx}$ is an independent set of atoms enabled at $\vx$. Nodes $\vx$ belong to Scheduler,
who chooses a set $N_{\vx}$, after which the play moves to $(\vx, N_{\vx})$. At nodes 
$(\vx, N_{\vx})$ Players 1 and 2 concurrently select outcomes for their atoms in $N_{\vx}$, and depending
on their choice the play moves to a new marking. Player 1 wins if the play reaches the 
final marking $\vx_f$. Since the winner of a concurrent game with reachability objectives
played on a graph can be determined in polynomial time (see e.g. \cite{DBLP:journals/jacm/AlurHK02}), 
the result follows.
\end{proof}

Unfortunately, there is a matching lower bound.

\begin{theorem}
\label{thm:hard}
The termination problem is EXPTIME-hard even for negotiations 
in which every reachable marking enables at most one atom.
\end{theorem}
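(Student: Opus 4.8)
The plan is to prove hardness by reduction from the acceptance problem for alternating Turing machines running in space polynomial in the input, which is EXPTIME-complete (APSPACE $=$ EXPTIME). Given such a machine $M$ with space bound $m = p(|w|)$ and an input $w$, I would build in polynomial time a negotiation arena $\N$ with a partition $N = N_1 \uplus N_2$ such that Player 1 has a winning strategy in the termination game for $\N$ if and only if $M$ accepts $w$. Existential configurations of $M$ are simulated by atoms in $N_1$, whose outcomes Player 1 chooses, and universal configurations by atoms in $N_2$, whose outcomes Player 2 chooses; reaching $n_f$ corresponds exactly to $M$ entering an accepting state. Since a faithful simulation must encode all $|Q|\cdot m \cdot |\Gamma|^m$ configurations in the marking while the arena itself stays polynomial, this exponential gap is what yields the lower bound.

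For the construction I use one \emph{control agent} $C$ and one \emph{cell agent} $T_i$ per tape cell $i \in \{1, \dots, m\}$. The ready-set of $T_i$ encodes the symbol stored in cell $i$, and that of $C$ encodes the current state $q$ and head position $h$. For every position $i$, symbol $s$ and state $q$ I introduce one atom $\mathit{step}_{i,s,q}$ with parties $C$ and $T_i$; there are only $m\cdot|\Gamma|\cdot|Q|$ of them. The key device is that $C$ in state $(q,h)$ is ready for $\{\mathit{step}_{h,s,q} : s \in \Gamma\}$ (it knows the state and head but not the scanned symbol), while $T_i$ holding $s$ is ready for $\{\mathit{step}_{i,s,q} : q \in Q\}$. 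Hence the \emph{only} atom enabled at a configuration marking is $\mathit{step}_{h,t_h,q}$, where $t_h$ is the actual scanned symbol: this ``conjunction'' of the two ready-sets reads the head symbol without enumerating configurations, and guarantees that every reachable marking enables at most one atom. The outcomes of $\mathit{step}_{i,s,q}$ are exactly the legal transitions $(q',s',d)$ of $M$ from $(q,s)$; firing one simultaneously updates $T_i$ to hold $s'$ and $C$ to $(q', h{+}d)$, so a single atom occurrence realizes one machine step. I place $\mathit{step}_{i,s,q}$ in $N_1$ when $q$ is existential and in $N_2$ when $q$ is universal. The initial atom $n_0$, a party of every agent as required, has a single outcome installing the initial configuration, and non-accepting configurations with no applicable transition are turned into deadlocks simply by providing no enabled atom, so such plays are lost by Player 1. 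Acceptance is handled by a \emph{sweep} phase: on reaching the accepting state, $C$ fires single-outcome atoms $\mathit{sweep}_i$ (one per cell, each added to every symbol-state of $T_i$) that walk all cell agents towards $n_f$; afterwards every agent is ready for $n_f$, it occurs, and $\vx_f$ is reached. As $C$ is never simultaneously ready for a $\mathit{step}$ atom and a sweep atom, the at-most-one-enabled invariant persists through the whole run.

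Correctness then follows by matching the two-player reachability game on $\N$ with the acceptance game of $M$: a Player 1 strategy forcing $n_f$ is exactly an accepting run tree of $M$, and if $M$ rejects then Player 2 can steer every play into a deadlock or an infinite non-accepting computation, in both cases avoiding $n_f$. The main obstacle, and the crux of the design, is precisely this: simulating a single step with a \emph{fixed, polynomial} set of atoms while keeping exactly one atom enabled, even though the head ranges over exponentially many positions and the scanned symbol must be read and overwritten locally. The conjunction trick together with the position/symbol/state-indexed $\mathit{step}$ atoms is what resolves it; the remaining checks, that no player profits from an ``illegal'' outcome (the outcomes of each step atom are by definition exactly the legal transitions) and that the assignment of atoms to $N_1, N_2$ mirrors the quantifier alternation, are then routine.
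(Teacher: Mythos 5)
Your proposal is correct and takes essentially the same approach as the paper: a reduction from alternating (linearly/polynomially) space-bounded Turing machine acceptance, where atoms are indexed by (position, symbol, state), the ready-sets of cell agents encode the tape contents while the ready-set of a control part encodes state and head position, and the intersection of these ready-sets is exactly the ``conjunction trick'' the paper uses to guarantee that each reachable marking enables precisely one atom, with atoms assigned to Player 1 or Player 2 according to existential/universal states. The only cosmetic differences are that the paper splits your single control agent $C$ into two agents $I$ (state) and $P$ (head position), and that your explicit sweep phase for reaching $n_f$ corresponds to details the paper defers to its appendix.
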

\begin{proof}
Sketch. (See the appendix for details.)
The proof is by reduction from the acceptance problem 
for alternating, linearly-bounded Turing machines (TM)\cite{DBLP:journals/jacm/ChandraKS81}. Let 
$M$ be such a TM with transition relation $\delta$, and let $x$ be an input of length $n$. 

We define a negotiation with an agent $I$ modeling the internal state of $M$,
an agent $P$ modeling the position of the head, and one agent $C_i$ 
for each $1 \leq i \leq n$ modeling the $i$-th cell of the tape. 
The set of atoms contains one atom $n_{q,\alpha,k}$ for each triple 
$(q, \alpha, k)$, where $q$ is a state of $M$, $k$ is the current position of the 
head, and $\alpha$ is the current symbol in the $k$-th tape cell. 

The parties of the atom $n_{q, \alpha, k}$ are $I$, $P$ and $C_k$, in particular,
 $I$ and $P$ are agents of all atoms.
The atom $n_{q, \alpha, k}$ has one outcome 
$\r_\tau$ for each element $\tau \in \delta(q,\alpha)$, where $\tau$ 
is a triple consisting of a new state, a new tape symbol, and a direction 
for the head. We informally define the function $\trans(n_{q, \alpha, k},C_k, \r_\tau)$ 
by means of an example. Assume that, for 
instance, $\tau = (q', \beta, R)$, i.e., at control state $q$ and with the head reading 
$\alpha$, the machine can go to control state $q'$, write $\beta$, and move the head to 
the right. Then we have:  (i) $\trans(n_{q, \alpha, k},I,\r_\tau)$ contains all atoms of the form
$n_{q', \_, \_}$ (where $\_$ stands for a wildcard); (ii) $\trans(n_{q, \alpha, k},P,\r)$ 
contains the atoms $n_{\_, \_, k+1}$; and (iii) $\trans(n_{q, \alpha, k},C_k,\r_\tau)$ 
contains all atoms $n_{\_, \beta, \_}$. If atom $n_{q, \alpha, k}$ is 
the only one enabled and the outcome $\r_\tau$ occurs, then clearly in the new marking
the only atom enabled is $n_{q', \beta, k+1}$. So every reachable marking enables at most one atom.

Finally, the negotiation also has an initial atom that, loosely speaking, takes care of modeling the initial configuration.

The partition of the atoms is: an atom $n_{q, \alpha, k}$ belongs to $N_1$ if
$q$ is an existential state of $M$, and to $N_2$ if it is universal. It is easy to see that $M$ accepts $x$
if{}f Player 1 has a winning strategy.
\end{proof}

Notice that if no reachable marking enables two or more atoms, 
Scheduler never has any choice. Therefore, the termination problem is EXPTIME-hard
even if the strategy for Scheduler is fixed. 

A look at points (i)-(iii) in the proof sketch of this theorem shows that 
the negotiations obtained by the reduction are highly nondeterministic. 
In principle we could expect a lower complexity in the deterministic case.
However, this is not the case.

\begin{theorem}
\label{thm:hard2}
The termination problem is EXPTIME-hard even for deterministic negotiations 
in which every reachable marking enables at most one atom.
\end{theorem}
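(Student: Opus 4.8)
The plan is to reuse the reduction of Theorem~\ref{thm:hard} from the acceptance problem of an alternating linearly-bounded Turing machine $M$ on input $x$, keeping the same encoding of a configuration: an agent $I$ holding the control state and head position, and one agent $C_k$ per tape cell holding its current symbol. The only obstacle to determinism in that construction is that the ``recombination'' of this distributed information is performed by the three proper hyper-arcs of points (i)--(iii): after a step, $I$, the head agent and the cell agents are each left \emph{ready for a whole set} of atoms, and it is the enabling relation (the intersection of these sets) that singles out the next atom. Determinism forbids this, and moreover every agent has only polynomially many possible ready-atoms, so it has only polynomial memory; in particular no central agent can ever hold the whole tape, and the symbol under the head cannot simply be ``read'' into a controller. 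I would therefore replace every intersection by an explicit, sequential \emph{guess-and-verify} protocol.

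The key idea is to let Player~1 \emph{guess} each piece of information that was previously recovered by intersection (principally the scanned symbol, and the values exchanged between the state agent and the relevant cell agent), and to \emph{verify} the guess by a synchronization whose atom is indexed by the guessed value: the holding agent is (deterministically) ready for the atom indexed by its \emph{true} value, while the guesser points to the atom indexed by its \emph{guessed} value. If the guess is correct the two agents meet and the atom fires; if it is wrong they point to distinct atoms, neither of which has all its parties, and the play \emph{deadlocks}. Since in the termination game a deadlock is a loss for Player~1, Player~1 is forced to guess truthfully. Truthful guessing is always possible because the holding agents faithfully track the configuration (each updates deterministically on the outcome it observes), so the matching atom is always available. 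All guess/verify atoms are placed in $N_1$; the transition-selection atom, indexed by the (now verified) state and symbol, is placed in $N_1$ if the $M$-state is existential and in $N_2$ if it is universal, and its outcomes are exactly the elements of $\delta$, so Player~2's only genuine choice is the universal branching of $M$.

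Forcing the \emph{written} symbol, which I expect to be the main obstacle, is handled by a mandatory verification after the (freely chosen) write: the state agent carries the selected transition $\tau$ and hence the symbol $\beta$ to be written, and it checks the cell's \emph{new} content by pointing to the cell's hold-atom indexed by $\beta$; a wrong write leaves the two agents on different atoms and deadlocks. Crucially the state agent never enters a symbol-indexed cell-atom, so it never loses the state, whereas a separate reader agent performs the symbol-indexed meetings; this separation is exactly what lets one agent retain $\tau$ while another performs the cell synchronization. By making all write and all guess/verify atoms belong to $N_1$, every spurious deadlock is self-inflicted by Player~1 and hence never profitable, so the written symbol is effectively pinned to $\tau$'s value. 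All phases are sequenced so that in every reachable marking exactly one atom is enabled (each parked agent sits on an atom whose other party is not yet ready), and every agent moves to a single atom per outcome, so the resulting negotiation is deterministic and single-enabled, as required.

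It then remains to prove the correspondence ``Player~1 wins the termination game iff $M$ accepts $x$'', and the hard part is to show that the auxiliary bookkeeping confers no spurious power on either player despite the fact that strategies may use the full history. I would argue that (a)~honest guessing is \emph{feasible} and \emph{strictly optimal} for Player~1 (any dishonest guess or incorrect write only causes a deadlock, i.e.\ a loss, and the true values are determined by the play, so an honest Player~1 strategy always exists), and (b)~Player~2 cannot manufacture a deadlock that does not correspond to a rejecting branch of $M$, since its only moves are legitimate transitions at universal states, each of which is followed by a forced, non-blocking continuation. The base cases align: an accepting halting configuration is routed to $n_f$ (termination, Player~1 wins), while a rejecting or stuck configuration deadlocks (Player~1 loses), matching vacuous acceptance at universal states and rejection at existential ones. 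Since alternating linear-space acceptance is EXPTIME-complete, this yields EXPTIME-hardness for deterministic, single-enabled negotiations.
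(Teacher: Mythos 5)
Your proposal is correct and follows essentially the same route as the paper: the paper likewise replaces the hyper-arc intersections by Player-1-controlled guess atoms $n_{q,k}$ (parties $I,P$, guess the scanned symbol) and $n_{\alpha,k}$ (parties $P,C_k$, guess the control state), routing agents deterministically so that a wrong guess strands them on mismatched atoms and deadlocks, which loses the termination game for Player 1 and thus forces honest simulation. The only divergence is your extra write-verification phase, which the paper does not need: the written symbol is already pinned down deterministically, since the outcome $r_\tau$ of the transition atom $n_{q,\alpha,k}$ sends $C_k$ straight to the hold-atom $n_{\beta,k}$.
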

\begin{proof}
Sketch. (See the appendix for details.) We modify the 
construction of Theorem \ref{thm:hard} so that it yields 
a deterministic negotiation. The old construction has an atom
$n_{q, \alpha, k}$ for each state $q$, tape symbol $\alpha$, 
and cell index $k$, with $I$, $P$, and $C_k$ as parties. 

The new construction adds atoms $n_{q, k}$, with $I$ and $P$ as parties,
and $n_{\alpha, k}$, with $P$ and $C_k$ as parties. Atoms $n_{q, k}$ have an outcome 
for each tape symbol $\alpha$, and atoms $n_{\alpha, k}$ have an outcome for each state $q$. 
New atoms are controlled by Player 1. 

In the new construction, after the outcome of $n_{q, \alpha, k}$ for transition 
$(q',\beta,R) \in \delta(q,\alpha)$, agent $C_k$ moves to $n_{\beta, k}$, and agents 
$I$ and $P$ move to $n_{q', k+1}$. Intuitively, $C_k$ waits for the head to return to cell $k$, 
while agents $I$ and $P$ proceed. Atom $n_{q', k+1}$ has an outcome for every tape symbol $\gamma$. 
Intuitively, at this atom Player 1 guesses the current tape symbol in cell $k+1$;
the winning strategy corresponds to guessing right. After guessing, say, symbol $\gamma$,
agent $I$ moves directly to $n_{q',\gamma, k+1}$, while $P$ moves to $n_{\gamma, k+1}$.
Atom $n_{\gamma, k+1}$ has one outcome for every state of $M$. Intuitively, Player 1 now
guesses the current control state $q'$, after which both $P$ and  $C_{k+1}$ 
move to $n_{q',\gamma, k+1}$. Notice that all moves are now deterministic. 

If Player 1 follows the winning strategy, then the plays mimic those of the old construction:
a step like $(n_{q, \alpha, k}, (q',\beta,R))$ in the old construction, played when the current symbol
in cell $k+1$ is $\gamma$, is mimicked by a sequence 
$(n_{q, \alpha, k}, (q',\beta,R))$ $(n_{q',k+1}, \gamma)$ $(n_{\gamma, k+1}, q')$ of moves
in the new construction.

\end{proof}

\section{Termination in Sound Deterministic Negotiations}
\label{sec:soundDet}

In \cite{negI,negII} it was shown that the soundness problem 
(deciding whether a negotiation is sound) can be solved in polynomial time 
for deterministic negotiations and
acyclic, weakly deterministic negotiations (the case of cyclic weakly 
deterministic negotiations is open), while the problem is PSPACE-complete for
arbitrary negotiations. Apparently, Theorem \ref{thm:hard2} proves that 
the tractability of deter\-mi\-nis\-tic negotiations stops at game problems. 
We show that this is not the case. Well-designed negotiations are sound, 
since otherwise they contain atoms that can never occur (and can therefore be removed),
or a deadlock is reachable. Therefore, we are only interested
in the termination problem for sound negotiations. We prove that for sound deterministic negotiations
(in fact, even for the larger class of weakly deterministic type 2 negotiations) the 
termination and concluding-outcome problems are solvable in polynomial time. For this, we show that the 
well-known attractor construction for reachability games played on graphs as arenas can be ``lifted'' 
to sound and weakly deterministic type 2 negotiation arenas.

\begin{definition}
Let $\N$ be a negotiation arena with a set of atoms $N=N_1 \cup N_2$. Given $n \in N$, let $P_{n,det}$ 
be the set of deterministic agents participating in $n$.

The attractor of the final atom $n_f$ is $\mathcal{A} = \bigcup\limits_{k=0}^\infty \mathcal{A}_{k}$, where $\mathcal{A}_{0}  =  \{ n_f \}$ and 
\begin{equation*}
\begin{array}[t]{rl} 
\mathcal{A}_{k+1}  =  \mathcal{A}_{k} \;
 \cup & \{ n \in N_1 : \exists \r \in R_n \forall a \in P_{n,det} : \; \mathcal{X}(n,a,\r) \in \mathcal{A}_{k} \} \\
 \cup & \{ n \in N_2 : \forall \r \in R_n \forall a \in P_{n,det}: \; \mathcal{X}(n,a,\r) \in \mathcal{A}_{k}\}
\end{array}
\end{equation*}

Given a marking $\vx \neq \vx_f$ of $\N$ and a deterministic agent $a$, 
the attractor position of $a$ at $\vx$ is the smallest $k$ such that 
$\vx(a) \in \mathcal{A}_k$, or $\infty$ if $\vx(a) \notin \mathcal{A}$. 
Let $a_1, \ldots, a_k$ be the deterministic agents of $\N$.
The (attractor) position vector of $\vx$ is the tuple $(p_1, ..., p_k)$ where $p_i$ is the attractor position of $a_i$.
\end{definition}

\begin{theorem}
\label{thm:attract}
Let $\N$ be a sound, weakly deterministic type 2 negotiation arena. Player 1 has a winning strategy in the 
termination game iff $n_0 \in \mathcal{A}$.
\end{theorem}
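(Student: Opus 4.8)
The plan is to prove both implications by lifting the classical attractor argument for reachability games to the negotiation setting, using the attractor position vector as a progress measure and invoking soundness to rule out non-final deadlocks. Two structural facts drive everything. First, a deterministic agent $a$ always has a singleton marking, so ``$\vx(a)$'' denotes a single atom, and the \emph{only} enabled atom having $a$ as a party is $\vx(a)$ itself (any enabled $m$ with $a\in P_m$ forces $\vx(a)=m$). Second, by weak determinism every reachable---hence every enabled---atom has a deterministic party.

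For the direction $n_0\in\mathcal{A}\Rightarrow$ Player~1 wins, I first record the invariant: if every deterministic agent lies in $\mathcal{A}$ at $\vx$, then every atom $n$ enabled at $\vx$ lies in $\mathcal{A}$, since $n$ has a deterministic party $a$ with $\vx(a)=n\in\mathcal{A}$. I then let Player~1 play, at any scheduled atom $n\in N_1$ (which by the invariant lies in $\mathcal{A}$, say at level $k+1$), an outcome $\r$ witnessing $n\in\mathcal{A}_{k+1}$, i.e.\ with $\mathcal{X}(n,a,\r)\in\mathcal{A}_k$ for all $a\in P_{n,det}$; the fixpoint clause for $N_2$ guarantees that \emph{every} outcome of a scheduled atom $n\in N_2$ has the same effect. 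The core is the measure $\Phi(\vx)=p_1+\cdots+p_k$, the sum of the entries of the attractor position vector, which is finite under the invariant. I would check (i) the invariant is preserved, as each scheduled atom sends its deterministic parties into a strictly lower $\mathcal{A}_k$ while agents party to no scheduled atom are unchanged; and (ii) $\Phi$ strictly decreases each round, because the scheduled set is nonempty and independent, so each of its atoms contributes at least one deterministic party whose position drops by at least one, and these drops add up. A bounded nonnegative integer cannot decrease forever, so no consistent play is infinite; each reaches a marking enabling no atom, which by soundness (clause~(b)) is $\vx_f$, reached through an occurrence of $n_f$. Hence Player~1 wins.

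For the converse $n_0\notin\mathcal{A}\Rightarrow$ Player~1 has no winning strategy, I would give a joint Scheduler--Player~2 counter-strategy preserving the invariant ``some deterministic agent $a$ has $\vx(a)\notin\mathcal{A}$''. It holds initially, taking $a$ to be a deterministic party of $n_0\notin\mathcal{A}$. The complement trap, read off the fixpoint, says: for $n\in N_1\setminus\mathcal{A}$ every outcome leaves some $b\in P_{n,det}$ with $\mathcal{X}(n,b,\r)\notin\mathcal{A}$, and for $n\in N_2\setminus\mathcal{A}$ some outcome does. Now Scheduler exploits the fact that $\vx(a)$ is the unique enabled atom containing $a$: as long as any other atom is enabled, Scheduler schedules a singleton avoiding $a$, parking it; if $\vx(a)$ is the only enabled atom, Scheduler must schedule it, but then the trap property---with Player~2 choosing the bad outcome when $\vx(a)\in N_2$, and automatically when $\vx(a)\in N_1$---produces a (possibly new) deterministic witness outside $\mathcal{A}$. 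Thus the invariant persists forever, so $\vx_f$, whose reachability requires all deterministic agents at $n_f\in\mathcal{A}_0$, is never reached; as soundness forbids reachable non-final deadlocks, the play is infinite and Player~2 wins.

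The step I expect to be the main obstacle is the backward direction's handling of Scheduler: getting the quantifier structure of the complement trap exactly right and, decisively, observing that a deterministic agent can be ``parked'' at its atom because that atom is the unique enabled atom containing it, so Scheduler is forced to dislodge it only in the single case covered by the trap. The forward direction's only delicate point is verifying that $\Phi$ strictly decreases under concurrent moves, which rests on the independence of the scheduled set so that the per-atom decreases are disjoint and cumulative.
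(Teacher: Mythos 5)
Your proof is correct and follows essentially the same route as the paper's: the same attractor strategy and deterministic-agent invariant for the direction $n_0 \in \mathcal{A}$ (your scalar measure $\Phi$ is just a linearization of the paper's componentwise well-founded order on position vectors), and the same complement-trap invariant plus soundness for the converse. The only real difference is that your backward direction enlists Scheduler to ``park'' the witness agent, which is unnecessary and slightly weakens the conclusion: as your own case analysis already shows, if the witness $a$ is not a party of any scheduled atom it stays put, and otherwise the scheduled atom must be $\vx(a)$ and the trap applies, so Player 2's strategy alone preserves the invariant against \emph{every} Scheduler---the fact the paper exploits in Corollary \ref{cor:poly} to collapse the game to a two-player game.
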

\begin{proof}
We start with an observation: If all deterministic agents are ready to take part in $n_f$, 
then $n_f$ and only $n_f$ is enabled. 
Indeed, since deterministic agents are only ready to engage in at most one atom, the only atom
with a deterministic party that can be enabled is $n_f$. Moreover, by weak determinism type 2 every atom has 
a deterministic party, and so no atom other than $n_f$ can be enabled. Finally, by soundness at least
one atom is enabled, and so $n_f$ is the only enabled atom. 

\noindent ($\Leftarrow)$: Assume that $n_0 \in \mathcal{A}$. We fix the \emph{attractor strategy} for Player 1.
We define the attractor index of an atom $n \in \mathcal{A}$ as the smallest $k$ such that 
$n \in \mathcal{A}_k$. The strategy for an atom $n \in N_1 \cap \mathcal{A}$ 
is to choose any outcome such that all deterministic parties of $n$ move to an atom 
of smaller attractor index; formally, we choose any outcome $\r$ such that 
for every deterministic party $a$ the singular atom in $\trans(n,a,\r)$ has smaller attractor index than $n$. 
Such an outcome exists by construction of $\mathcal{A}$. For atoms $n \in N_1 \setminus \mathcal{A}$ 
we choose an arbitrary atom. Notice that this strategy is not only memoryless,
but also independent of the current marking. 

We show that the attractor strategy is winning. By the definition of the game, 
we have to prove that every play following the strategy ends with $n_f$ occurring.
By the observation above, it suffices to prove that the play reaches a marking at which
every deterministic agent is ready to engage in $n_f$. 

Assume there is a play $\pi$ where Player 1 plays according to the attractor strategy, which 
never reaches such a marking. Then the play never reaches the final marking $\vx_f$ either.
We claim that at all markings reached along $\pi$, the deterministic agents are only
ready to engage in atoms of $\mathcal{A}$. We first observe that, initially, all deterministic agents 
are ready to engage in $n_0$, and $n_0 \in \mathcal{A}$. Now,
assume that in some marking reached along $\pi$ the deterministic agents are only
ready to engage in atoms of $\mathcal{A}$. Then, by weak determinism type 2, all enabled atoms
belong to $\mathcal{A}$, and therefore also the atoms chosen by Scheduler.
By the definition of $\mathcal{A}$, after an atom of $N_2 \cap \mathcal{A}$ occurs, the
deterministic agents are ready to engage in atoms of $\mathcal{A}$ only; by the definition
of the attractor strategy, the same holds for atoms of $N_1 \cap \mathcal{A}$. This concludes 
the proof of the claim.

Since all markings $\vx$ reached along $\pi$ satisfy $\vx \neq \vx_f$ and 
$\vx(a) \in \mathcal{A}$ for every deterministic agent $a$, they all
have an associated attractor position vector whose components are natural numbers.
Let $P_k$ denote the position vector of the marking reached after $k \geq 0$ steps
in $\pi$. Initially only the initial atom $n_0$ is enabled, and so 
$P_0 = (k_0, k_0, \ldots, k_0)$, where $k_0$ is the attractor position of $n_0$. 
We have $k_0 < |N|$, the number of atoms of the negotiation arena $\N$. 
Given two position vectors $P= (p_1, ..., p_k)$ and $P'=(p_1', ..., p_k')$, we say $P \prec P'$ if
$p_i \leq p_i'$ for every $1 \leq i \leq k$, and $p_i < p_i'$ for at least one $1 \leq i \leq k$.
By the definition of the attractor strategy, the sequence $P_0, P_1, ...$ of attractor positions 
satisfies $P_{i+1} \prec P_i$ for every $i$. Since $\prec$ is a well-founded order, 
the sequence is finite, i.e., the game terminates. By the definition of the game, it terminates
at a marking that does not enable any atom. Since, by assumption, 
the play never reaches the final marking $\vx_f$, this marking is a deadlock, which contradicts the 
soundness of $\N$.

\noindent ($\Rightarrow$): As this part is dual to part $\Leftarrow$, we only sketch the idea. The complete proof can be found in the appendix.

Let $\mathcal{B} = N \setminus  \mathcal{A}$, and 
assume $n_0 \in \mathcal{B}$. 
A winning strategy for Player 2 is to choose for an atom $n \in N_2 \cap \mathcal{B}$ any outcome $r$ such that at least one deterministic agent
moves to an atom not in $\mathcal{A}$. Such an outcome exists by construction of $\mathcal{A}$. For atoms in $N_2 \setminus \mathcal{B}$
we chose an arbitrary outcome.

This strategy achieves the following invariant: If at some marking $\vx$ reached along a play according to this strategy, there is a deterministic agent $a$ that satisfies $\vx(a)=n$ and $n \in \mathcal{B}$, then the same holds after one more step of the play. We then conclude that since $n_0 \in \mathcal{B}$ this invariant holds in every step of the play. Therefore $n_f$, in which all deterministic agents participate and which is not in $\mathcal{B}$, can never be enabled. Finally, because of soundness a play cannot end in a deadlock and thus every play will be of infinite length.
\end{proof}

\begin{corollary}
\label{cor:poly}
For the termination game over sound and weakly deterministic type 2 negotiations, the following holds:
\begin{enumerate}[(a)]
\item The game collapses to a two-player game.
\item Memoryless strategies suffice for both players.
\item The winner and the winning strategy can be computed in $O(|\outc| \ast |\agents|)$ time, where 
$A$ is the set of agents and $|\outc|$ the total number of outcomes of the negotiation.
\end{enumerate}
\end{corollary}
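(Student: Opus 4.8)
The plan is to obtain all three parts from the proof of Theorem~\ref{thm:attract}, the only genuine work being the complexity bound in part~(c).

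For part~(a) I would observe that Theorem~\ref{thm:attract} characterises the winner purely by the condition $n_0 \in \mathcal{A}$, a condition that makes no reference to Scheduler. Concretely, the attractor strategy exhibited in the $(\Leftarrow)$ direction wins for Player~1 against \emph{any} behaviour of Player~2 \emph{and} Scheduler whenever $n_0 \in \mathcal{A}$, while the strategy in the $(\Rightarrow)$ direction wins for Player~2 against any Player~1 and any Scheduler whenever $n_0 \notin \mathcal{A}$. Hence Scheduler cannot influence the outcome: whichever independent set it picks, the winner is already fixed. This is exactly the statement that the game collapses to a two-player game. For part~(b) I would point out that both strategies built in the proof of Theorem~\ref{thm:attract} are memoryless: Player~1's attractor strategy assigns to each atom $n \in N_1 \cap \mathcal{A}$ a fixed witnessing outcome and is explicitly noted to be independent even of the current marking, and Player~2's strategy fixes, for each $n \in N_2 \cap \mathcal{B}$, an outcome sending some deterministic agent out of $\mathcal{A}$. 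Both depend only on the atom being resolved, so memoryless strategies suffice for both players.

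The substance is part~(c). By Theorem~\ref{thm:attract}, deciding the winner reduces to deciding membership $n_0 \in \mathcal{A}$, so it suffices to compute the attractor within the stated bound. A naive fixpoint iteration of the defining recurrence re-scans all atoms in each of up to $|N|$ rounds and is too slow, so I would lift the classical linear-time attractor algorithm for reachability games on graphs to this two-level setting. Call an outcome $\r$ of an atom $n$ \emph{good} once $\mathcal{X}(n,a,\r) \in \mathcal{A}$ holds for every deterministic party $a \in P_{n,det}$; then an atom $n \in N_1$ enters $\mathcal{A}$ as soon as one of its outcomes becomes good, and an atom $n \in N_2$ enters $\mathcal{A}$ once all of its outcomes are good. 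To track this incrementally I would maintain, for each pair $(n,\r)$, a counter of the deterministic parties $a$ with $\mathcal{X}(n,a,\r) \notin \mathcal{A}$ (the outcome turns good when this reaches $0$), and for each $N_2$-atom a counter of its not-yet-good outcomes. Using a reverse index that, for each atom $m$, lists all triples $(n,\r,a)$ with $\mathcal{X}(n,a,\r)=m$, each time an atom is added to $\mathcal{A}$ I decrement the affected counters, push newly qualifying atoms onto a worklist, and record for every $N_1$-atom the outcome that first made it qualify; this record is precisely the memoryless winning strategy of part~(b).

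The main obstacle, and the point that makes the bound tight, is the accounting. Each triple $(n,\r,a)$ is touched exactly once, namely when $\mathcal{X}(n,a,\r)$ enters $\mathcal{A}$, and the number of such triples is $\sum_{n} |R_n|\cdot|P_{n,det}| \le |\agents|\cdot\sum_n |R_n| = |\agents|\cdot|\outc|$; building the reverse index and initialising the counters costs the same. Since every atom has at least one outcome we have $|N| \le |\outc|$, so the per-atom bookkeeping is dominated, and the whole computation---including extraction of the strategy and the final test $n_0 \in \mathcal{A}$---runs in $O(|\outc|\cdot|\agents|)$ time, as claimed.
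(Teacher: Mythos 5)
Your proposal is correct and takes essentially the same approach as the paper: parts (a) and (b) are argued exactly as in the paper's proof, by observing that the strategies constructed for Theorem \ref{thm:attract} are independent of Scheduler's choices and are memoryless. For part (c), which the paper delegates to its appendix, your worklist algorithm with per-outcome counters over deterministic parties and the accounting over triples $(n,r,a)$ is the standard linear-time lifting of the attractor computation and correctly achieves the $O(|\outc| \ast |\agents|)$ bound.
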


\begin{proof}
(a) and (b): The attractor computation and the strategies used in the proof of Theorem \ref{thm:attract} are independent of the choices of Scheduler; the strategies are memoryless. \\
(c) An algorithm achieving this complexity can be found in the appendix.
\end{proof}

We still have to consider the possibility that requiring soundness alone, without the addition of
weak determinism type 2, already reduces the complexity of the termination problem. The following theorem shows that this is not the case, and concludes our study. The proof is again an adaption of the reduction from turing machines and can be found in the appendix.

\begin{theorem}
\label{thm:hard3}
The termination problem is EXPTIME-hard for sound negotiation arenas.
\end{theorem}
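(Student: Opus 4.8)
The plan is to recycle the reduction from Theorem \ref{thm:hard} (from the acceptance problem for alternating linearly bounded Turing machines) and to repair the two reasons it fails to be sound using only \emph{dominated} gadgets --- additions that offer each player nothing but losing moves --- so that the winner of the game is left unchanged while $n_f$ becomes reachable from every marking. I keep the agents $I$ (state), $P$ (head position) and $C_1,\dots,C_n$ (cells) together with the atoms $n_{q,\alpha,k}$, making one cosmetic change: every cell agent is made \emph{permanently} ready to engage in $n_f$, on top of its usual symbol atoms. Because $n_f$ needs \emph{all} agents to be ready, this never enables $n_f$ too early, but it means that henceforth reaching $n_f$ only requires driving $I$ and $P$ there. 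Accepting halts send $I,P$ to $n_f$ (Player 1 wins), rejecting halts get a self-loop (an infinite play, won by Player 2); by Theorem \ref{thm:hard} the resulting core game is faithful, i.e.\ Player 1 can force $n_f$ iff $M$ accepts $x$.

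The first defect concerns clause (b) of soundness: from a rejecting or looping configuration $n_f$ is unreachable. I would add to every atom $n$ an extra outcome $\mathit{esc}$ that makes $I$ and $P$ ready for a \emph{gate} atom whose two outcomes are ``commit'' (sending $I,P$ to $n_f$, which then becomes enabled since the cells are already ready) and ``loop'' (a self-loop). The discipline that makes this harmless is an \emph{asymmetry}: the gate reached from an atom of $N_1$ lies in $N_2$, and the gate reached from an atom of $N_2$ lies in $N_1$. If Player 1 escapes from one of her atoms she lands on a gate owned by Player 2, who loops forever --- a losing move for her; if Player 2 escapes from one of his atoms he presents Player 1 with a gate on which Player 1 commits and wins --- a losing move for him. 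Thus every $\mathit{esc}$ is dominated and optimal play ignores it, yet in the underlying transition system every marking now reaches $n_f$ by passing through $n$, its gate, and then $n_f$, which is exactly clause (b).

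The second and harder defect is clause (a): the atoms $n_{q,\alpha,k}$ range over all triples, but most encode configurations reached by no computation and so are enabled at no reachable marking. I would place $n_0$ in $N_1$ and give it one additional ``chaos'' outcome that makes $I$, $P$ and every $C_k$ ready for every atom to which they belong, and makes a fresh agent $L$ ready for a new self-looping atom $\ell\in N_2$. The marking it produces enables \emph{every} $n_{q,\alpha,k}$ simultaneously, so a single reachable marking witnesses clause (a) for all of them at once. The point to verify is that this chaos outcome is again dominated, and it is \emph{losing for Player 1}: from the chaos marking Scheduler may choose the singleton $\{\ell\}$ at every step and Player 2 loops in $\ell$, so Player 1 can never force $n_f$; hence Player 1, who owns $n_0$, takes the ordinary start precisely when it is winning, and the overall winner coincides with that of the core game. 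Clause (b) also survives at the chaos marking, since one may fire $\ell$ so that $L$ becomes ready for $n_f$ and then fire any accepting configuration.

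The crux, I expect, is exactly this reconciliation of clause (a) with value preservation: every atom must be made reachable without granting either player a ``teleport'' to a configuration of their choosing. The asymmetric gates keep the escapes inert, and the Scheduler-loopable atom $\ell$ is what renders the all-enabling chaos move safe --- it guarantees that the chaos branch can always be forced into a non-terminating play, so that switching on every configuration at once cannot let Player 1 win from a position she ought to lose. What remains is routine: confirming that $L$ and the always-ready cells leave the core simulation intact, that the two gate atoms are themselves reachable and can reach $\vx_f$, and that the whole object has size polynomial in $|M|+n$; EXPTIME-hardness for sound arenas then follows from \cite{DBLP:journals/jacm/ChandraKS81}.
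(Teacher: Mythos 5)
Your proposal is correct and follows the same route as the paper: the paper likewise proves Theorem \ref{thm:hard3} by adapting the Turing-machine reduction of Theorem \ref{thm:hard} so that the arena becomes sound while the winner of the game is preserved (the paper's details are deferred to the appendix of the full version). Your two gadgets---escape outcomes funneled through a gate owned by the \emph{opposing} player, which repairs condition (b) of soundness while remaining dominated for whoever could use them, and the chaos outcome at $n_0$ neutralized by the Player-2-controlled loop atom $\ell$ (which keeps agent $L$ away from $n_f$ unless Player 2 releases it), which repairs condition (a)---do exactly this job, and the remaining verifications you defer are indeed routine.
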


\section{The Concluding-Outcome Problem}

We demonstrate that the algorithm can also be used to solve the concluding-outcome problem. Remember that for each agent $a$ we have selected a set $G_a$ of outcomes that lead to the final atom and that we want occur. The key idea is to modify the negotiation in the following way: We add two atoms per agent, say \emph{good$_a$} and \emph{bad$_a$}, and redirect the outcomes of that agent that lead to the final atom to \emph{good$_a$} if they are in $G_a$ and to \emph{bad$_a$} otherwise. We illustrate this approach by example.

We slightly change the setting of Father, Mother and two Daughters such that both daughters want to go to the party. In the negotiation, that results in two additional atoms where D$_2$ talks alone with each parent. Figure \ref{extFDDM} shows the new negotiation omitting all edges of father and mother to simplify the representation. They both have a nondeterministic edge from each of their ports leading to all of their respective ports.

Imagine the goal 
of Daughter D$_1$ is to get a ``yes'' answer for herself, but a `no'' answer for D$_2$,
who always spoils the fun. Will a coalition with one parent suffice to achieve the goal?
To answer this question, we modify the negotiation before applying the construction as shown in 
Figure \ref{extFDDM}. We introduce dummy atoms \emph{good$_i$} and \emph{bad$_i$} for each Daughter $i$, and
we redirect ``yes'' transitions leading to $n_f$ to \emph{good$_1$} and \emph{bad$_2$} (these transitions are represented as dashed lines), analogously we redirect ``no'' transitions to \emph{bad$_1$} and \emph{good$_2$} (represented as dotted lines).

We apply the algorithm with a slight alteration: Instead of starting from $n_f$, we initially add all newly introduced good atoms to the attractor.
Applying the algorithm for the coalition Father-D$_1$ 
yields $\{$\emph{good$_1$},\emph{bad$_2$},$n_4\}$ as attractor; for Mother-D$_1$
we get $\{$\emph{good$_1$},\emph{bad$_2$},$n_5\}$. So neither parent has enough influence to 
achieve the desired outcome.

\begin{figure}
\centering
\scalebox{0.8}{
\input{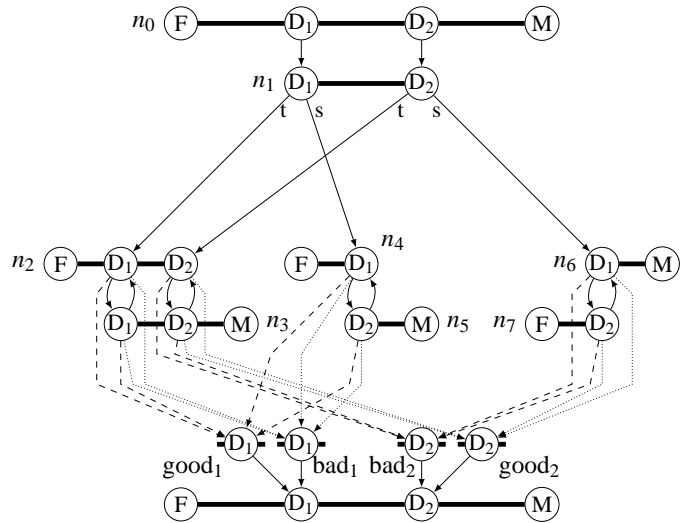}
}
\caption{Applying the algorithm to the final-outcome problem. Father and mother edges omitted.}
\label{extFDDM}
\end{figure}

\section{Conclusions and Related Work}

We have started the study of games in the negotiation model 
introduced in \cite{negI,negII}. Our results confirm the low computational complexity 
of deterministic negotiations, however with an important twist: while even 
the simplest games are EXPTIME-hard for arbitrary deterministic negotiations, they become polynomial in the  \emph{sound} case. So soundness, a necessary feature of a well designed negotiation, also turns out to have a drastic beneficial effect on the complexity of the games.

We have shown that our games are also polynomial for sound and weakly deterministic negotiations. However, the complexity of deciding soundness for this case is unknown. We conjecture that it is also polynomial, as for the deterministic case. 

The objectives we have considered so far are qualitative: Either a coalition can reach termination or not, either it can enforce a concluding outcome or not. A possibility for future studies would be to look at quantitative objectives.

We have only considered 2-player games, since in our settings the behavior of the third player (the Scheduler) is either irrelevant, or is controlled by one of the other two players. We intend to study the extension to a proper 3-player game, or to a multi player game. Combined with qualitative objectives, this allows for multiple interesting questions: Which coalition of three agents can reach the best payoff? Which two agents should a given agent side with to maximize his payoff? Are there ``stable'' coalitions, that means, no agents can change to the opposing coalition and improve his payoff?

\bibliographystyle{eptcs}
\bibliography{references}


\end{document}